\newtheorem{theorem}{Theorem}
\newtheorem{lemma}[theorem]{Lemma}
\newtheorem{definition*}[theorem]{Definition}
\newtheorem*{remark*}{Remark}
\newtheorem{proposition}[theorem]{Proposition}
\newtheorem{remark}[theorem]{Remark}
\newtheorem{example}[theorem]{Example}
\numberwithin{theorem}{section}
\newcommand{\be}%
{\protect\setcounter{equation}{\value{subsubsection}}}  
\newcommand{\ee}%
\newcommand{\Rmnum}[1]{\expandafter\@slowromancap\romannumeral #1@}
\newcommand{\F}{\ensuremath{\mathbb F}}
\newcommand{\Z}{\ensuremath{\mathbb Z}}
\newcommand{\N}{\ensuremath{\mathbb N}}
\newcommand{\ls}[1]
{\dimen0=\fontdimen6\the\font\lineskip=#1\dimen0
	\advance\lineskip.5\fontdimen5\the\font
	\advance\lineskip-\dimen0
	\lineskiplimit=0.9\lineskip
	\baselineskip=\lineskip
	\advance\baselineskip\dimen0
	\normallineskip\lineskip\normallineskiplimit\lineskiplimit
	\normalbaselineskip\baselineskip
	\ignorespaces}
\begin{document}
	
\bibliographystyle{abbrv}
	
\title{A Class of few-Lee weight $\Z_2[u]$-linear codes using simplicial complexes and minimal codes via Gray map}
	
\author{Pramod Kumar Kewat$^1$ and Nilay Kumar Mondal$^{1,*}$}
	
\footnotetext[1]{
	\small{\,
		Department of Mathematics and Computing, Indian Institute of Technology (Indian School of Mines), Dhanbad 826 004, India.\\}}
	
	
\email {pramodk@iitism.ac.in (P.K. Kewat), 
[corresponding author]nilay.17dr000565@am.ism.ac.in (N.K. Mondal)}  
	
\subjclass{94B05, 94A62}
\keywords{Few-Lee weight codes, Mixed alphabet ring, Simplicial complexes, Minimal codes.}

	
	
\thispagestyle{plain} \setcounter{page}{1}

\begin{abstract}
Recently some mixed alphabet rings are involved in constructing few-Lee weight additive codes with optimal or minimal Gray images using suitable defining sets or down-sets. Inspired by these works, we choose the mixed alphabet ring $\Z_2\Z_2[u]$ to construct a special class of linear code $C_L$ over $\Z_2[u]$ with $u^2=0$ by employing simplicial complexes generated by a single maximal element. We show that $C_L$ has few-Lee weights by determining the Lee weight distribution of $C_L$. Theoretically, this shows that we may employ simplicial complexes to obatin few-weight codes even in the case of mixed alphabet rings. We show that the Gray image of $C_L$ is self-orthogonal and we have an infinite family of minimal codes over $\Z_2$ via Gray map, which can be used to secret sharing schemes.

\end{abstract}

\maketitle
\markboth{P.K. Kewat, N.K. Mondal}{A class of few-Lee weight $\Z_2[u]$-linear codes from simplicial complexes and minimal codes via Gray map}

\section{Introduction}\label{section 1}
Constructing few-weight codes over finite fields is a long standing area of interest in algebraic coding theory. It has key applications in graph theory, finite geometry and combinatorial designs (see \cite{calderbank1986applications}). The authors, in \cite{shi2016first} first introduced few-Lee weight codes over the finite ring $\F_2+u\F_2$ with $u^2=0$ and as an application they found some infinite families of optimal (with respect to the Griesmer bound) few-weight codes over $\F_2$ through Gray map. Apart from this, using a famous characterization of minimal codes, namely the Ashikhmin-Barg lemma, they showed that some of the Gray images are also minimal over $\F_2$, which have direct applications in secret sharing schemes and authentication codes (see \cite{anderson1998secret}, \cite{carlet2005secret}, \cite{ding2007authentication}, \cite{yuan2006secret}). Actually finding minimal codes over finite fields is itself a central topic in algebraic coding theory and in \cite{shi2016first} the authors succesfully clubbed these two important topics together. After that lots of researchers have constructed several classes of few-Lee weight codes over different finite chain rings of characteristic $p$ having optimal or minimal codes as appliactions. In particular for $p=2$, see \cite{shi2018trace2}, \cite{shi2021simplicial}, \cite{shi2021simplicial2}, \cite{shi2021trace3}, \cite{shi2020generator}, \cite{wu2020simplicial}. The researchers mainly used suitable defining sets, simplicial complexes or down-sets (for $p>2$ only) to construct such codes (see \cite{chang2018simplicial}, \cite{ding2015trace}, \cite{wu2020down}). Very recently the authors in \cite{shi2022nonchain} first studied few-Lee weight codes over a non-chain ring $\F_2+u\F_2+v\F_2+uv\F_2$.
\vskip 2pt
The authors, in \cite{dougherty2016mixedfirst} constructed one-Lee weight additive codes over the mixed alphabet ring $\Z_2\Z_4$. In \cite{shi2020mixedsecond}, the authors studied one-Lee weight and two-Lee weight codes over $\Z_2\Z_2[u,v]$ with $u^2=0,v^2=0$ and $uv=vu$. The authors, in \cite{sole2021mixedthird} generalized \cite{dougherty2016mixedfirst} and studied the construction of one-Lee weight and two-Lee weight $\Z_2\Z_4[u]$-additive codes.
In \cite{wu2021mixedtrace}, the authors constructed few-Lee weight additive codes over $\Z_p\Z_p[u]$ with $u^2=0$ having good codes as Gray images by using a suitable defining set $D=(D_0^e,D_0^e+u\F_{p^m})\subseteq\F_{p^m}\times(\F_{p^m}+u\F_{p^m})$, where $D_0^e$ is the cyclotomic class of order e in $\F_{p^m}$. After that the authors, in \cite{wang2021mixeddown}, raised a natural question that whether we can construct such few-Lee weight codes with minimal or optimal Gray images over mixed alphabet rings by employing down-sets or simplicial complexes and they provided a partial answer to that by using the mixed alphabet ring  $\Z_p\Z_p[u]$ with $u^2=u$ and $p>2$ to study some few-Lee weight additive codes by using the set $L=\{t_1,ut_2+(1-u)t_3~|~t_i\in\Delta_i^c\}\subseteq(\Z_{p}^m\times\Z_{p}^m+u\Z_{p}^m)$ consisting of down-sets $\Delta_i\subseteq\Z_{p}^m$ generated by a single element.
\vskip 5pt
In this paper, we choose the mixed alphabet ring $\Z_2\Z_2[u]$ and defined an inner product on $\Z_2^m\Z_2^m[u]$ to construct a class of linear codes over $\Z_2[u]$ with $u^2=0$ using the set $L=\{t_1,t_2+ut_3~|~t_i\in\Delta_i^c\}\subseteq(\Z_{2}^m\times\Z_{2}^m+u\Z_{2}^m)$ consisting of simplicial complexes $\Delta_i\subseteq\Z_{2}^m$ generated by a single element. Further we compute the Lee weight distribution for this class of codes and find it to be of few-Lee weight. To the best of our knowledge, this is the first time simplicial complexes are being used in the case of mixed alphabet rings and we must note that down-sets (a subclass of simplicial complexes) can be used only in the case $p>2$. So it gives the complete answer to the question raised in \cite{wang2021mixeddown}. We have also studied their Gray images and show that the Gray image of $C_L$ is self-orthogonal. Moreover, we have an infinite family of binary minimal codes with the help of Ashikhmin-Barg lemma.
\vskip 2pt
This paper is organized as follows. Section 2 covers the basic definitions and related prerequisites. In Section 3, we compute the Lee weight distributions of the linear code $C_L$ over $\Z_2[u]$ as defined in Section 2. As an application, we also find an infinite family of minimal codes over $\Z_2$ as Gray images of $C_L$ in Section 4. Finally we give an overview of this material in Section 5.

\section{Preliminaries}\label{section 2}
\subsection{Rings and simplicial complexes}
Let $\Z_2$ be the ring of integers modulo $2$ and $\Z_2[u]=\Z_2+u\Z_2,~u^2=0$. We define  $\mathcal{R}=\Z_2\Z_2[u]=\{(x,y+uz)|x,y,z\in\Z_2\}$ and $\mathcal{R}^m=\{(p,q+ur)|p,q,r\in\Z_2^m\}$, where $m$ is any positive integer strictly greater than 1. Since $\Z_2$ is a subring of $\Z_2[u]$, it is easy to observe that $\mathcal{R}^m$ is a $\Z_2[u]$-module under component-wise addition and $\Z_2[u]$-scalar multiplication defined as: $(p_1,q_1+ur_1)+(p_2,q_2+ur_2)=((p_1+p_2),(q_1+q_2)+u(r_1+r_2))$ and $(y+uz)(p,q+ur)=(yp,yq+u(yr+zq))$.
\vskip 2pt
For any vector $w\in\Z_2^m$, the support of $w$ denoted by $supp(w)$ is defined as the set of all nonzero coordinate positions of $w$. Clearly, $w\mapsto supp(w)$ is a bijection between $\Z_2^m$ and the power set of $[m]=\{1,\ldots,m\}$ i.e., $2^{[m]}$. 
Let $\Delta\subseteq\Z_2^m$ such that if $w_1\in\Delta$ and $supp(w_2)\subseteq supp(w_1)$ imply $w_2\in\Delta$, then $\Delta$ is called a simplicial complex. For a vector $w\in\Delta$, if there does not exist any vector $v\in\Delta$ such that $supp(w)\subseteq supp(v)$, then we call $w$ is a maximal element of $\Delta$. Clearly a simplical complex $\Delta$ may admit more than one maximal elements. Here we consider the simplicial complexes with a single maximal element. The simplicial complex $\Delta_{supp(w)}$ generated by a single maximal element $w$ is defined to be the set of all subsets of $supp(w)\subseteq[m]$ and is of size $2^{|supp(w)|}$, where $|\cdot|$ denotes the size of a set.

\subsection{Lee weight, inner product and code}\label{subsec 2.2}
A Gray map $\phi:\Z_2[u]\rightarrow\Z_2^2$ is defined as $y+uz\mapsto(z,y+z)$, where $y,z\in\Z_2$. 
We can naturally extend the map $\phi$ to $\Phi:\Z_2^m+u\Z_2^m\rightarrow\Z_2^{2m}$ as 
$\Phi(q+ur)=(r,q+r)$, where $q,r\in\Z_2^m$. For a vector $w\in\Z_2^m$, the Hamming weight of $w$ denoted by $wt_H(w)$ is defined as $wt_H(w)=|supp(w)|$. The Lee weight of a vector $w'=(q+ur)\in\Z_2^m+u\Z_2^m$ denoted by $wt_L(a)$ is defined as its Hamming weight under the Gray map $\Phi$, i.e. $wt_L(w')=wt_L(q+ur)=wt_H(r)+wt_H(q+r)$. The Lee distance of $w'_1,w'_2\in\Z_2^m+u\Z_2^m$ is defined as $wt_L(w'_1-w'_2)$. It is quite easy to check that $\Phi$ is an isometry between $(\Z_2^m+u\Z_2^m,d_L)$ and $(\Z_2^{2m},d_H)$. 
\vskip 2pt
A linear code $C$ of length $m$ over a finite commutative ring $R$ is an $R$-submodule of $R^m$. The Euclidean inner product of vectors $a=(a_1,a_2,\ldots,a_m),b=(b_1,b_2,\ldots,b_m)\in R^m$ is $\langle a,b\rangle=\sum\limits_{i=1}^ma_ib_i\in R$. Let $w_1,w_2\in\Z_2^m$ and $w'_1,w'_2\in\Z_2^m+u\Z_2^m$. Now we define the inner product of two vectors $(w_1,w'_1)$ and $(w_2,w'_2)$ in $\mathcal{R}^m$ as $((w_1,w'_1)\cdot(w_2,w'_2))=u\langle w_1,w_2\rangle+\langle w'_1,w'_2\rangle$.
\vskip 2pt
Let $\Delta_1,\Delta_2,\Delta_3\subseteq\Z_2^m$ be three simplicial complexes generated by a single maximal elements having support $D,E,F\subseteq[m]$ respectively and they are not equal to $\Z_2^m$. From now on we denote $\Delta_1,\Delta_2,\Delta_3$ as $\Delta_D,\Delta_E,\Delta_F$ throughout this paper. We set $L=\{l=(t_1,t_2+ut_3)|t_1\in\Delta_D^c,t_2\in\Delta_E^c,t_3\in\Delta_F^c\}\subseteq\mathcal{R}^m$, where $\Delta^c=\Z_2^m\setminus\Delta$. We define $C_L=\{c_a=((a\cdot l))_{l\in L}|a\in\mathcal{R}^m\}$. It is easy to check that $C_L$ is a code over $\Z_2[u]$ of length $|L|$. The code $C_L$ is said to be a $t$-Lee weight code if all of its codewords have only $t$ different nonzero Lee weights.

\subsection{Generating function and characteristic function}
Let $X$ be a subset of $\Z_2^m$. Chang and Hyun (\cite{chang2018simplicial}) introduced the following $m$-varriable generating function associated with the set $X$:
\begin{equation}\label{eq 1}
	\mathcal{H}_X(x_1,x_2,\ldots,x_m)=\sum\limits_{u\in X}\prod\limits_{i=1}^m(x_i)^{u_i}\in\Z[x_1,x_2,\ldots,x_m], ~\text{where}~u=(u_1,u_2,\ldots,u_m)\in\Z_2^m.
\end{equation}
\begin{lemma}\cite[Lemma 2]{chang2018simplicial}\label{lem 2.1} Let $\Delta\subseteq\Z_2^m$ be a simplicial complex. Then the following results hold.
\begin{itemize}
	\item [(1)] $\mathcal{H}_\Delta(x_1,x_2,\ldots,x_m)+\mathcal{H}_{\Delta^c}(x_1,x_2,\ldots,x_m)=\mathcal{H}_{\Z_2^m}(x_1,x_2,\ldots,x_m)=\prod\limits_{i\in[m]}(1+x_i)$,
	\item [(2)] $\mathcal{H}_{\Delta_S}(x_1,x_2,\ldots,x_m)=\prod\limits_{i\in S}(1+x_i)$, where $\Delta_S$ is a simplicial complex generated by a single maximal elemnt having support $S\subseteq[m]$.
\end{itemize}
\end{lemma}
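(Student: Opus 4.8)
The plan is to derive both identities directly from the defining formula \eqref{eq 1} for $\mathcal{H}_X$, using nothing more than the distributive law in $\Z[x_1,\ldots,x_m]$ together with the bijection $w\mapsto supp(w)$ between $\Z_2^m$ and $2^{[m]}$. The one computation everything rests on is the \emph{master identity}
\[
\mathcal{H}_{\Z_2^m}(x_1,\ldots,x_m)=\sum_{u\in\Z_2^m}\prod_{i=1}^m x_i^{u_i}=\prod_{i=1}^m\Big(\sum_{u_i\in\{0,1\}}x_i^{u_i}\Big)=\prod_{i\in[m]}(1+x_i),
\]
where the middle equality is just the observation that the $m$ coordinates of $u$ vary independently over $\{0,1\}$; equivalently, expanding $\prod_{i\in[m]}(1+x_i)$ by the distributive law yields exactly one monomial $\prod_i x_i^{u_i}$ for each choice of ``$1$'' (recorded as $u_i=0$) or ``$x_i$'' (recorded as $u_i=1$) in the $i$-th factor.

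For part (1) I would only note that $\Delta$ and $\Delta^c=\Z_2^m\setminus\Delta$ partition $\Z_2^m$, so the sum defining $\mathcal{H}_{\Z_2^m}$ breaks up as $\mathcal{H}_\Delta+\mathcal{H}_{\Delta^c}$; combining this with the master identity gives the claim. It is worth remarking that this step uses nothing about $\Delta$ being a simplicial complex — it holds for an arbitrary subset of $\Z_2^m$.

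For part (2), recall that $\Delta_S$ is precisely the set of vectors $w\in\Z_2^m$ with $supp(w)\subseteq S$, that is, the $u$ with $u_i=0$ for every $i\notin S$ and with the coordinates $u_i$, $i\in S$, otherwise unrestricted. Hence in $\mathcal{H}_{\Delta_S}=\sum_{u\in\Delta_S}\prod_{i=1}^m x_i^{u_i}$ every factor indexed by $i\notin S$ equals $x_i^0=1$, and the surviving product ranges over the independent binary choices of $u_i$ for $i\in S$; factoring just as in the master identity gives $\mathcal{H}_{\Delta_S}=\prod_{i\in S}\big(\sum_{u_i\in\{0,1\}}x_i^{u_i}\big)=\prod_{i\in S}(1+x_i)$.

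There is no genuine obstacle in this argument — it is a direct calculation — but two small points deserve care when writing it out: the bookkeeping separating the ``frozen'' coordinates ($i\notin S$, contributing a factor $1$) from the ``free'' coordinates ($i\in S$, contributing $1+x_i$); and the degenerate case $S=\emptyset$, in which $\Delta_S=\{0\}$, $\mathcal{H}_{\Delta_S}=1$, and the empty-product convention $\prod_{i\in\emptyset}(1+x_i)=1$ keeps the formula valid.
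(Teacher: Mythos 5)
Your proof is correct and complete: the paper itself gives no proof of this lemma (it is quoted from Chang--Hyun, Lemma 2), and your argument --- splitting the sum over the partition $\Z_2^m=\Delta\sqcup\Delta^c$ for (1), and factoring the generating function over the free coordinates $i\in S$ for (2) --- is exactly the standard derivation, with the degenerate case $S=\emptyset$ handled properly.
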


\vskip 2pt
Suppose $X$ and $Y$ are two subsets of $[m]$. We define a characteristic function $\chi:2^{[m]}\times2^{[m]}\rightarrow\{0,1\}$ as $\chi(X|Y)=1$ if and only if $X\cap Y=\emptyset$ and zero otherwise.
\begin{lemma}\cite[Lemma 3.1]{wu2020simplicial}\label{lem 2.2} Let $D,E\subseteq[m]$. 
Then
	\begin{itemize}
		\item [(1)] $|\{X\subseteq[m]~|~\emptyset\ne X,\chi(X|D)=1\}|=2^{m-|D|}-1$ and $|\{X\subseteq[m]~|~\chi(X|D)=0\}|=2^m-2^{m-|D|}$.
		\item [(2)] $|\{X\subseteq[m]~|~\emptyset\ne X,\chi(X|D)\chi(X|E)=1\}|=2^{m-|D\cup E|}-1$ and $|\{X\subseteq[m]~|~\chi(X|D)\chi(X|E)=0\}|=2^m-2^{m-|D\cup E|}$.
		\item [(3)] Define $D\oplus E=(D\cup E)\setminus(D\cap E)$. Let $T_i=|\{(X,Y)~|~\emptyset\ne X,Y\subseteq[m],X\ne Y,\chi(Y|E)=1,\chi(X|D)+\chi(X\oplus Y|D)=i\}|$, where $i\in\{0,1,2\}$. Then $T_0=2^m(2^{m-|E|}-1)+2^{m-|D|}(1+2^{m-|D\cup E|}-2^{m+1-|E|})$, $T_1=2(2^{m-|D|}-1)(2^{m-|E|}-2^{m-|D\cup E|})$, and $T_2=(2^{m-|D|}-2)(2^{m-|D\cup E|}-1)$.
	\end{itemize} 
\end{lemma}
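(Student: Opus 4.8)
The plan is to dispatch parts (1) and (2) by an immediate bijection and to reduce part (3) to a finite case analysis organized around the partition of $[m]$ induced by $D$ and $E$. For (1), note that $\chi(X|D)=1$ is exactly the condition $X\cap D=\emptyset$, so the first set is in bijection with the nonempty subsets of $[m]\setminus D$, giving $2^{m-|D|}-1$; since $[m]$ has $2^m$ subsets in all, the complementary count is $2^m-2^{m-|D|}$. Part (2) is the same argument after observing that $\chi(X|D)\chi(X|E)=1$ iff $X\cap(D\cup E)=\emptyset$.

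For part (3) I would first record two elementary identities: $\chi(X|D)=1$ iff $X\cap D=\emptyset$, and, since an element lies in $X\oplus Y$ iff it lies in exactly one of $X,Y$, $\chi(X\oplus Y|D)=1$ iff $X\cap D=Y\cap D$. Next, partition $[m]$ into the four blocks $D\cap E$, $D\setminus E$, $E\setminus D$ and $[m]\setminus(D\cup E)$, of sizes $n_1,n_2,n_3,n_4$; the relevant bookkeeping identities are $n_2+n_4=m-|E|$, $n_3+n_4=m-|D|$, $n_4=m-|D\cup E|$, and $n_1+n_2+n_3+n_4=m$. Because $\chi(Y|E)=1$ forces $Y\subseteq[m]\setminus E$, the set $Y$ meets only the blocks $D\setminus E$ and $[m]\setminus(D\cup E)$. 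Encoding $X$ and $Y$ by their intersections with the four blocks, one checks that $\chi(X|D)+\chi(X\oplus Y|D)$ depends only on $X\cap(D\cap E)$, $X\cap(D\setminus E)$ and $Y\cap(D\setminus E)$: it is $0$ whenever $X\cap(D\cap E)\neq\emptyset$; and when $X\cap(D\cap E)=\emptyset$, it equals $2$ iff $X\cap(D\setminus E)=Y\cap(D\setminus E)=\emptyset$, equals $1$ iff exactly one of ``$X\cap(D\setminus E)=\emptyset$'' and ``$X\cap(D\setminus E)=Y\cap(D\setminus E)$'' holds, and equals $0$ otherwise. The remaining block-intersections of $X$ and of $Y$ are free, subject only to the global constraints $X\neq\emptyset$, $Y\neq\emptyset$, $X\neq Y$.

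With this dichotomy in hand, $T_2$ and $T_1$ follow by direct counting. The case $i=2$ forces $X\cap D=\emptyset$ and $Y\cap D=\emptyset$ (together with $Y\cap E=\emptyset$), so after removing the degenerate configurations ($X=\emptyset$, $Y=\emptyset$, $X=Y$) one gets $T_2=(2^{n_4}-1)(2^{n_3+n_4}-2)=(2^{m-|D\cup E|}-1)(2^{m-|D|}-2)$. The case $i=1$ splits into the two sub-cases above, and a short count shows that each contributes $(2^{n_2}-1)2^{n_4}(2^{n_3+n_4}-1)$, whence $T_1=2(2^{n_2}-1)2^{n_4}(2^{n_3+n_4}-1)=2(2^{m-|E|}-2^{m-|D\cup E|})(2^{m-|D|}-1)$. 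For $T_0$ the cleanest route is to avoid enumerating its several sub-cases and instead use $T_0=N-T_1-T_2$, where $N=(2^m-1)(2^{m-|E|}-1)-(2^{m-|E|}-1)=(2^{m-|E|}-1)(2^m-2)$ is the total number of admissible pairs $(X,Y)$; substituting the closed forms and simplifying with the block-size identities returns exactly $2^m(2^{m-|E|}-1)+2^{m-|D|}(1+2^{m-|D\cup E|}-2^{m+1-|E|})$.

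I expect the main obstacle to be purely bookkeeping: within each case one must subtract the degenerate configurations $X=\emptyset$, $Y=\emptyset$, $X=Y$ without double counting — the place where an inclusion--exclusion slip would most easily occur — and then rewrite every product of the $2^{n_i}$ in terms of the prescribed quantities $2^{m-|D|}$, $2^{m-|E|}$, $2^{m-|D\cup E|}$, $2^m$. Handling $T_0$ via the identity $T_0=N-T_1-T_2$ rather than head-on is what keeps the whole argument short.
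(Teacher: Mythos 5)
Your proposal is correct. Note, however, that the paper itself offers no proof of this statement: it is quoted verbatim from \cite[Lemma 3.1]{wu2020simplicial}, so there is no in-paper argument to compare against. Your self-contained derivation checks out: the bijections in (1) and (2) are the standard ones; the key reduction in (3), namely that $\chi(X\oplus Y|D)=1$ iff $X\cap D=Y\cap D$ and that under $\chi(Y|E)=1$ the sum $\chi(X|D)+\chi(X\oplus Y|D)$ depends only on $X\cap(D\cap E)$, $X\cap(D\setminus E)$ and $Y\cap(D\setminus E)$, is exactly right; and I verified that your counts $(2^{m-|D\cup E|}-1)(2^{m-|D|}-2)$ for $T_2$, $2(2^{m-|D|}-1)(2^{m-|E|}-2^{m-|D\cup E|})$ for $T_1$, and the subtraction $T_0=(2^{m-|E|}-1)(2^m-2)-T_1-T_2$ all simplify to the stated closed forms. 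The only subtle points --- that in the $T_2$ count each admissible $Y$ excludes exactly one choice of $X$ because $[m]\setminus(D\cup E)\subseteq[m]\setminus D$, and that in the sub-case $X\cap(D\setminus E)=Y\cap(D\setminus E)\neq\emptyset$ of $T_1$ one must still subtract the $2^{n_4}$ pairs with $X=Y$ --- are both handled correctly by your bookkeeping. Computing $T_0$ by complementation rather than by enumerating its sub-cases is a sensible economy.
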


\section{Main Result}\label{section 3}

Let $a=(p,q+ur),l=(t_1,t_2+ut_3)$, where $a\in\mathcal{R}^m$ and $l\in L$. The Lee weigt of the codeword $c_a$ of $C_L$ is
\begin{align*}
	wt_L(c_a)&=wt_L(((p,q+ur)\cdot(t_1,t_2+ut_3))_{t_1,t_2,t_3})\\
	&=wt_L((u\langle p,t_1\rangle+\langle q+ur,t_2+ut_3\rangle)_{t_1,t_2,t_3})\\
	&=wt_L((q t_2+u(p t_1+q t_3+r t_2))_{t_1,t_2,t_3})\\
	&=wt_H((p t_1+q t_3+r t_2)_{t_1,t_2,t_3})+wt_H((p t_1+(q+r) t_2+q t_3)_{t_1,t_2,t_3})\\
	&=|L|-\frac{1}{2}\sum\limits_{y\in\Z_2}\sum\limits_{t_1\in\Delta_D^c}\sum\limits_{t_2\in\Delta_E^c}\sum\limits_{t_3\in\Delta_F^c}(-1)^{(p t_1+q t_3+r t_2)y}\\
	&~+|L|-\frac{1}{2}\sum\limits_{y\in\Z_2}\sum\limits_{t_1\in\Delta_D^c}\sum\limits_{t_2\in\Delta_E^c}\sum\limits_{t_3\in\Delta_F^c}(-1)^{(p t_1+(q+r) t_2+q t_3)y}\\
	&=|L|-\frac{1}{2}\sum\limits_{t_1\in\Delta_D^c}(-1)^{p t_1}\sum\limits_{t_2\in\Delta_E^c}(-1)^{r t_2}\sum\limits_{t_3\in\Delta_F^c}(-1)^{q t_3}\\
	&~-\frac{1}{2}\sum\limits_{t_1\in\Delta_D^c}(-1)^{p t_1}\sum\limits_{t_2\in\Delta_E^c}(-1)^{(q+r) t_2}\sum\limits_{t_3\in\Delta_F^c}(-1)^{q t_3}.
\end{align*}
Using Equation \eqref{eq 1}, it is easy to observe that for $X=\Delta_D$ and $x_i=(-1)^{p_i}$, where $\Delta_D$ is defined as above and $p=(p_1,p_2,\ldots,p_m)\in\Z_2^m$, we have $\mathcal{H}_{\Delta_D}((-1)^{p_1},(-1)^{p_2},\ldots,(-1)^{p_m})=\sum\limits_{t\in \Delta_D}\prod\limits_{i=1}^m(-1)^{p_it_i}=\sum\limits_{t\in\Delta_D}(-1)^{p t}$, where $t=(t_1,t_2,\ldots,t_m)\in\Delta_D\subseteq\Z_2^m$. So, by Lemma \ref{lem 2.1}, $\sum\limits_{t\in\Delta_D^c}(-1)^{p t}= \mathcal{H}_{\Delta_D^c}((-1)^{p_1},(-1)^{p_2},\ldots,(-1)^{p_m})=\prod\limits_{i\in[m]}(1+(-1)^{p_i})-\mathcal{H}_{\Delta_D}((-1)^{p_1},(-1)^{p_2},\ldots,(-1)^{p_m})$ $=2^m\delta_{0,p}-\mathcal{H}_{\Delta_D}((-1)^{p_1},(-1)^{p_2},\ldots,(-1)^{p_m})$, where $\delta_{i,j}$ is the Kronecker delta function i.e. $\delta_{i,j}=1$ if and only if $i=j$ and zero otherwise. Also we have $\mathcal{H}_{\Delta_D}((-1)^{p_1},(-1)^{p_2},\ldots,(-1)^{p_m})=\prod\limits_{i\in D}(1+(-1)^{p_i})=2^{|D|}\chi(supp(p)|D)$, where $\chi$ is the characteristic function on the set $2^{[m]}\times2^{[m]}$ defined in Section \ref{section 2}. 
\vskip 2pt
So, further we can write
\begin{align}\label{eq 2}
	wt_L(c_a)=\nonumber&|L|-\frac{1}{2}[(2^m\delta_{0,p}-2^{|D|}\chi(supp(p)|D))(2^m\delta_{0,r}-2^{|E|}\chi(supp(r)|E))\\\nonumber&(2^m\delta_{0,q}-2^{|F|}\chi(supp(q)|F))]-\frac{1}{2}[(2^m\delta_{0,p}-2^{|D|}\chi(supp(p)|D))\\
	&(2^m\delta_{0,(q+r)}-2^{|E|}\chi(supp(q+r)|E))(2^m\delta_{0,q}-2^{|F|}\chi(supp(q)|F))].
\end{align}

\begin{lemma}\label{lem 3.1}
	Let $D,E,F\subseteq[m]$. 
Then
	\begin{itemize}
		\item [(1)] $|\{X\subseteq[m]~|~\chi(X|D)=0,\chi(X|E)=0\}|=2^m-(2^{m-|D\cup E|})[(2^{|D|-|D\cap E|})+(2^{|E|-|D\cap E|})-1]$ and $|\{X\subseteq[m]~|~\chi(X|D)=0,\chi(X|E)=1\}|=(2^{m-|D\cup E|})(2^{|D|-|D\cap E|}-1)$ and $|\{X\subseteq[m]~|~\chi(X|D)=1,\chi(X|E)=0\}|=(2^{m-|D\cup E|})(2^{|E|-|D\cap E|}-1)$.
		\item [(2)] Let $T'_i=|\{(X,Y)~|~\emptyset\ne X,Y\subseteq[m],X\ne Y,\chi(Y|E)=0,\chi(X|D)+\chi(X\oplus Y|D)=i\}|$, where $i\in\{0,1,2\}$. Then $\sum\limits_{i=1}^{i=3}T'_i=(2^m-2^{m-|E|})(2^m-2)$.
	
	\end{itemize}
\end{lemma}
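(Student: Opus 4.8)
The plan is to reduce both parts to elementary counting over the partition of $[m]$ induced by $D$ and $E$. Write $[m]$ as the disjoint union of the four ``Venn regions'' $D\cap E$, $D\setminus E$, $E\setminus D$ and $[m]\setminus(D\cup E)$, of sizes $|D\cap E|$, $|D|-|D\cap E|$, $|E|-|D\cap E|$ and $m-|D\cup E|$ respectively. A subset $X\subseteq[m]$ is specified freely and independently by its four intersections with these regions, and the two basic conditions translate as: $\chi(X|D)=1$ iff $X$ meets neither $D\cap E$ nor $D\setminus E$, and $\chi(X|E)=1$ iff $X$ meets neither $D\cap E$ nor $E\setminus D$.

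For part (1) I would first treat the two ``mixed'' cases directly. If $\chi(X|D)=0$ and $\chi(X|E)=1$, then $X$ is disjoint from $D\cap E$ and from $E\setminus D$ (hence from all of $E$), so the intersection $X\cap D$ equals $X\cap(D\setminus E)$ and must be nonempty; meanwhile $X\cap([m]\setminus(D\cup E))$ is arbitrary. This gives $(2^{|D|-|D\cap E|}-1)\,2^{m-|D\cup E|}$ such sets, and the case $\chi(X|D)=1,\chi(X|E)=0$ is symmetric, giving $(2^{|E|-|D\cap E|}-1)\,2^{m-|D\cup E|}$. For the doubly-vanishing case I would argue by complementation: the sets with $\chi(X|D)=\chi(X|E)=1$ are precisely the subsets of $[m]\setminus(D\cup E)$, numbering $2^{m-|D\cup E|}$ (which is the count of Lemma~\ref{lem 2.2}(2) augmented by the empty set). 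Subtracting the three determined cases from $2^m$ then yields $2^m-2^{m-|D\cup E|}\bigl[2^{|D|-|D\cap E|}+2^{|E|-|D\cap E|}-1\bigr]$, as asserted.

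For part (2) the key observation is that summing $T'_0+T'_1+T'_2$ removes the constraint $\chi(X|D)+\chi(X\oplus Y|D)=i$ altogether, since that sum always lies in $\{0,1,2\}$. Hence the total simply counts the pairs $(X,Y)$ with $\emptyset\ne X\subseteq[m]$, $Y\subseteq[m]$, $X\ne Y$ and $\chi(Y|E)=0$; note that $\chi(Y|E)=0$ already forces $Y\ne\emptyset$. By Lemma~\ref{lem 2.2}(1) there are $2^m-2^{m-|E|}$ admissible choices of $Y$, and for each fixed such $Y$ there are exactly $2^m-2$ choices of $X$, namely all subsets of $[m]$ except the two distinct sets $\emptyset$ and $Y$. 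Multiplying the two factors gives $(2^m-2^{m-|E|})(2^m-2)$.

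The computations involved are entirely routine; the only point needing a little care is the consistent treatment of the empty set, which never belongs to a class $\chi(\cdot|D)=0$ and which accounts for the ``$+1$'' discrepancy between the statements of Lemma~\ref{lem 2.2} and the region counts used above. I do not anticipate any genuine obstacle.
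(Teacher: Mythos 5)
Your proof is correct and follows essentially the same route as the paper: direct Venn-region counting for the two mixed cases, complementation for the doubly-vanishing case (the paper phrases this via Lemma~\ref{lem 2.2}(2) and a three-way disjoint decomposition, but that is the same computation), and for part (2) the observation that summing over $i$ removes the constraint on $\chi(X|D)+\chi(X\oplus Y|D)$, leaving $(2^m-2^{m-|E|})$ choices for $Y$ times $(2^m-2)$ for $X$. Your explicit remarks on the empty set (that $\chi(Y|E)=0$ forces $Y\ne\emptyset$, so $\emptyset$ and $Y$ are genuinely two distinct exclusions) are a welcome touch of care that the paper leaves implicit.
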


\begin{proof}
	\begin{itemize}
		\item [(1)] Let us consider the set $\{X\subseteq[m]~|~\chi(X|D)=0,\chi(X|E)=1\}$. Since $X\cap D\ne\emptyset$, $X$ has $(2^{|D|}-1)$ choices but also $X\cap E=\emptyset$, so the choices redeuces to $(2^{|D|-|D\cap E|}-1)$ and for each of these choices we have $(2^{m-|D\cup E|})$ possibilities as $X\subseteq[m]$. So all total we have $|\{X\subseteq[m]~|~\chi(X|D)=0,\chi(X|E)=1\}|=(2^{m-|D\cup E|})(2^{|D|-|D\cap E|}-1)$. Using similar arguments we have $|\{X\subseteq[m]~|~\chi(X|D)=1,\chi(X|E)=0\}|=(2^{m-|D\cup E|})(2^{|E|-|D\cap E|}-1)$.
		\vskip 2pt
		Now it is easy to observe that $\{X\subseteq[m]~|~\chi(X|D)\chi(X|E)=0\}=\{X\subseteq[m]~|~\chi(X|D)=0,\chi(X|E)=0\}\sqcup \{X\subseteq[m]~|~\chi(X|D)=0,\chi(X|E)=1\}\sqcup \{X\subseteq[m]~|~\chi(X|D)=1,\chi(X|E)=0\}$. So using the above two results and (2) of lemma \ref{lem 2.2}, we have $|\{X\subseteq[m]~|~\chi(X|D)=0,\chi(X|E)=0\}|=2^m-(2^{m-|D\cup E|})[(2^{|D|-|D\cap E|})+(2^{|E|-|D\cap E|})-1]$.
		\item [(2)] It is easy to observe that  $\sum\limits_{i=1}^{i=3}T'_i=|\{(X,Y)~|~ \emptyset\ne X,Y\subseteq[m],X\ne Y,\chi(Y|E)=0\}|$. Hence the result is clear as $Y$ has $(2^m-2^{m-|E|})$ choices (using (1) of Lemma \ref{lem 2.2}) and $X$ has $(2^m-2)$ as $X\ne\emptyset$ and $X\ne Y$.
	\end{itemize}
	
\end{proof}

\begin{theorem}\label{thm 3.2}
	Let $\Delta_D,\Delta_E,\Delta_F\subseteq\Z_2^m$ are three simplicial complexes with single maximal elements having support $D,E,F\subseteq[m]$ respectively. Let $L=\{(t_1,t_2+ut_3)|t_1\in\Delta_D^c,t_2\in\Delta_E^c,t_3\in\Delta_F^c\}$, where $\Delta^c=\Z_2^m\setminus\Delta$. Then the code $C_L$ of length $(2^m-2^{|D|})(2^m-2^{|E|})(2^m-2^{|F|})$ (as defined in Subsection \ref{subsec 2.2}) is linear over $\Z_2[u]$. The Lee weight distribution of $C_L$ is given in Table \ref{Tab 3.1} and the size of $C_L$ is given by
	$|C_L|=\begin{cases}
	2^{3m-1}&if~|D|=|E|=m-1,\\
	2^{3m}&otherwise.
	\end{cases}$
\end{theorem}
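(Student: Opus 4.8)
The plan is to verify the four assertions in turn --- the length, the $\Z_2[u]$-linearity, the Lee weight distribution (Table \ref{Tab 3.1}), and the cardinality of $C_L$. The length and linearity are short. By construction $C_L=\{(c_a)_{l\in L}:a\in\mathcal{R}^m\}$ has length $|L|=|\Delta_D^c|\cdot|\Delta_E^c|\cdot|\Delta_F^c|$; since a simplicial complex generated by a single maximal element with support $D$ has size $2^{|D|}$, we get $|\Delta_D^c|=2^m-2^{|D|}$, and likewise for $E$ and $F$, which is the claimed length. For linearity, $a\mapsto c_a$ is additive because the pairing $(\,\cdot\,)$ on $\mathcal{R}^m$ is bi-additive, and it commutes with scalars: expanding $((y+uz)a)\cdot l$ by means of $(y+uz)(p,q+ur)=(yp,yq+u(yr+zq))$ and using $u^2=0$ reproduces $(y+uz)((a\cdot l))$ in $\Z_2[u]$. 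Hence $C_L$ is $\Z_2[u]$-linear.

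The heart of the proof is the weight distribution. Starting from Equation \eqref{eq 2}, for $a=(p,q+ur)$ set
\[
A=2^m\delta_{0,p}-2^{|D|}\chi(supp(p)|D),\qquad C=2^m\delta_{0,q}-2^{|F|}\chi(supp(q)|F),
\]
\[
B_1=2^m\delta_{0,r}-2^{|E|}\chi(supp(r)|E),\qquad B_2=2^m\delta_{0,q+r}-2^{|E|}\chi(supp(q+r)|E),
\]
so that $wt_L(c_a)=|L|-\frac{1}{2}\,AC(B_1+B_2)$. Each of $A$ and $C$ independently takes the value $2^m-2^{|D|}$ (resp.\ $2^m-2^{|F|}$) when $p=0$ (resp.\ $q=0$), the value $-2^{|D|}$ (resp.\ $-2^{|F|}$) when the relevant vector is nonzero with support disjoint from $D$ (resp.\ $F$), and $0$ otherwise; the number of vectors in each of these three classes is supplied by Lemma \ref{lem 2.2}(1). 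The factor $B_1+B_2$ is the coupled term: since over $\Z_2$ one has $supp(q+r)=supp(r)\oplus supp(q)$, relabelling the ``$D$'' of Lemmas \ref{lem 2.2}(3) and \ref{lem 3.1}(2) as our $E$ and their ``$E$'' as our $F$, the joint behaviour of $\chi(supp(r)|E)$ and $\chi(supp(q+r)|E)$ over $\{q\ne0,\ \chi(supp(q)|F)=1\}$ is exactly what $T_0,T_1,T_2$ count, while over $\{q\ne0,\ \chi(supp(q)|F)=0\}$ the factor $C$ vanishes so that $wt_L(c_a)=|L|$ and only the total $\sum T_i'=(2^m-2^{m-|F|})(2^m-2)$ from Lemma \ref{lem 3.1}(2) is needed. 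The degenerate sub-cases $q=0$, $r=0$ and $r=q$ (with the remaining coordinates nonzero) --- precisely the configurations where the hypotheses $X,Y\ne\emptyset$, $X\ne Y$ of Lemma \ref{lem 2.2}(3) fail --- are dealt with directly by Lemmas \ref{lem 2.2}(1) and \ref{lem 3.1}(1). Multiplying out the at most three nonzero values of $A$, of $C$, and of $B_1+B_2$ produces the complete list of distinct nonzero Lee weights, and assembling the counts above gives their multiplicities, i.e.\ Table \ref{Tab 3.1}. The real subtlety --- and the reason the regime $|D|=|E|=m-1$ is singled out --- is that when $|D|=m-1$ one has $-2^{|D|}=-(2^m-2^{|D|})$, so the ``disjoint-support'' value of $A$ has the same magnitude as the ``$p=0$'' value (and similarly for $E$); several of the product values then coincide, and the corresponding rows of the table must be merged rather than double-counted.

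Finally the cardinality. By $\Z_2[u]$-linearity, $|C_L|=2^{3m}/|\ker|$ with $\ker=\{a:c_a=0\}$ (this also equals the multiplicity of weight $0$ in the count above). I would obtain $|\ker|$ directly: $c_a=0$ is equivalent to $\langle q,t_2\rangle=0$ together with $\langle p,t_1\rangle+\langle r,t_2\rangle+\langle q,t_3\rangle=0$ for all $t_1\in\Delta_D^c$, $t_2\in\Delta_E^c$, $t_3\in\Delta_F^c$. Because $|E|<m$, the set $\Delta_E^c$ spans $\Z_2^m$, so the first identity forces $q=0$; the second then splits (for nonempty index sets, $f(t_1)+g(t_2)+h(t_3)\equiv0$ iff each summand is constant and the constants sum to $0$) into the conditions that $t_1\mapsto\langle p,t_1\rangle$ be constant on $\Delta_D^c$ and $t_2\mapsto\langle r,t_2\rangle$ be constant on $\Delta_E^c$ with equal constant values. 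A brief spanning argument shows $t_1\mapsto\langle p,t_1\rangle$ is constant on $\Delta_D^c$ iff either $p=0$ (constant $0$) or $|D|=m-1$ and $supp(p)\subseteq[m]\setminus D$ (constant $1$ when $p\ne0$), and likewise for $r$ and $E$. Consequently $\ker=\{0,(p,ur)\}$ with $q=0$, $supp(p)=[m]\setminus D$, $supp(r)=[m]\setminus E$ exactly when $|D|=|E|=m-1$, and $\ker=\{0\}$ otherwise; this yields $|C_L|=2^{3m-1}$ in the first case and $2^{3m}$ in the second, with $\Delta_F$ playing no part since $q$ is already $0$. The principal obstacle throughout is the bookkeeping in the weight-distribution step: the triple product $AC(B_1+B_2)$ generates many case combinations, and tracking the coincidences of weight values in the $|D|$ or $|E|$ equal to $m-1$ regimes --- so that equal weights are not recorded twice --- together with the correct use of parts (3) of Lemma \ref{lem 2.2} and (2) of Lemma \ref{lem 3.1} for the coupled term $B_1+B_2$, is the delicate part.
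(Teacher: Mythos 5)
Your proposal is correct in substance and, for three of the four assertions (length, $\Z_2[u]$-linearity, and the Lee weight distribution), it follows exactly the paper's route: the weight computation is Equation \eqref{eq 2} rewritten compactly as $wt_L(c_a)=|L|-\tfrac12\,AC(B_1+B_2)$, and the frequencies come from Lemma \ref{lem 2.2} and Lemma \ref{lem 3.1} with precisely the relabelling ($D\mapsto E$, $E\mapsto F$, $X\mapsto supp(r)$, $Y\mapsto supp(q)$) you describe. Be aware that in the paper this ``multiplying out'' is the entire body of the proof --- an explicit enumeration of $57$ subcases indexed by which of $p,q,r$ vanish and by the values of the relevant characteristic functions --- so your one-sentence summary of that step is a correct plan rather than a completed argument. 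Where you genuinely depart from the paper is the cardinality: the paper obtains $|\mathrm{Ker}(f)|$ as a byproduct of the finished table, by identifying which rows have Lee weight $0$ and reading off their frequencies ($f_{25}=(2^{m-|D|}-1)(2^{m-|E|}-1)$ contributing the extra kernel element exactly when $|D|=|E|=m-1$). Your direct computation --- forcing $q=0$ because $\Delta_E^c$ linearly spans $\Z_2^m$, then splitting $\langle p,t_1\rangle+\langle r,t_2\rangle\equiv 0$ over the product domain into two constancy conditions, and showing constancy with value $1$ forces $|D|=m-1$ and $supp(p)=[m]\setminus D$ (and likewise for $r,E$) --- is valid and arguably more illuminating, since it explains \emph{why} the exceptional regime is $|D|=|E|=m-1$ with $F$ playing no role, rather than observing it numerically. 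Both approaches agree that $|\mathrm{Ker}(f)|=2$ iff $|D|=|E|=m-1$ and $1$ otherwise. One small caveat: the coincidence $2^{|D|}=2^m-2^{|D|}$ at $|D|=m-1$ that you flag as the source of merged table rows is handled in the paper not inside Theorem \ref{thm 3.2} but in Remark \ref{rem 3.3} and Table \ref{Tab 3.2}, where the frequencies are halved because each codeword arises from two elements of $\mathcal{R}^m$; your ``merge rather than double-count'' phrasing is consistent with that but should be made precise if you write the argument out.
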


\begin{proof}
	The length of $C_L$ is quite straightforward. To show $C_L$ is linear over $\Z_2[u]$, let us consider the the map $f$ between two $\Z_2[u]$-module $\mathcal{R}^m$ and $\Z_2[u]^{|L|}$ defined as
	\begin{align*}
	f:&\mathcal{R}^m\rightarrow\Z_2[u]^{|L|}\\
	&a\mapsto ((a\cdot l))_{l\in L}
	\end{align*}
	Clearly, $Im(f)=C_L$ and to prove $C_L$ is $\Z_2[u]$-linear, it is enough to show that $f$ is a $\Z_2[u]$-module homomorphism. Let $a_1=(p_1,q_1+ur_1)$ and $a_2=(p_2,q_2+ur_2)$. Then $((a_1\cdot l))=(q_1 t_2+u(p_1 t_1+q_1 t_3+r_1 t_2))$ and $((a_2\cdot l))=(q_2 t_2+u(p_2 t_1+q_2 t_3+r_2 t_2))$ and $(((a_1+a_2)\cdot l))=((q_1+q_2) t_2+u((p_1+p_2) t_1+(q_1+q_2) t_3+(r_1+r_2) t_2))=((a_1\cdot l))+((a_2\cdot l))$ (as $\Z_2[u]$ is a ring under component-wise addition). So, $(((a_1+a_2)\cdot l))_{l\in L}=((a_1\cdot l))_{l\in L}+((a_2\cdot l))_{l\in L}$, i.e., $f(a_1+a_2)=f(a_1)+f(a_2)$. 
	\vskip 2pt
	Again, let $\alpha=(y+uz)\in\Z_2[u]$. Then $((\alpha a_1\cdot l))=(((yp_1,yq_1+u(yr_1+zq_1))\cdot l))=u\langle yp_1,t_1\rangle+\langle yq_1+u(yr_1+zq_1),t_2+ut_3\rangle=\langle yq_1,t_2\rangle+u(\langle yq_1,t_3\rangle+\langle yr_1,t_2\rangle+\langle zq_1,t_2\rangle+\langle yp_1,t_1\rangle)=y\langle q_1,t_2\rangle+u(y\langle q_1,t_3\rangle+y\langle r_1,t_2\rangle+z\langle q_1,t_2\rangle+y\langle p_1,t_1\rangle)=(y+uz)(\langle q_1,t_2\rangle+u(\langle q_1,t_3\rangle+\langle r_1,t_2\rangle+\langle p_1,t_1\rangle))=\alpha ((a_1\cdot l))$. So, $((\alpha a_1\cdot l))_{l\in L}=\alpha ((a_1\cdot l))_{l\in L}$ i.e, $f(\alpha a_1)=\alpha f(a_1)$.
	\vskip 2pt
	So, $C_L$ is linear over $\Z_2[u]$ and $C_L\cong \frac{\mathcal{R}^m}{Ker(f)}$ with $|C_L|=\frac{|\mathcal{R}^m|}{|Ker(f)|}=\frac{2^{3m}}{|Ker(f)|}$. Now, first we compute the Lee weight distribution of $C_L$ and then we compute the size of $C_L$. 
	\vskip 5pt
	To compute the Lee weight distribution of $C_L$, we have $supp(q+r)=supp(q)\oplus supp(r)$ as decsribed in Lemma \ref{lem 2.2}. Let $f_i~1\le i\le57$ be the frequency of $wt_L(c_a)$. Next we divide the proof in the following cases and calculate $f_i$ using Equation \eqref{eq 2}, Lemma \ref{lem 2.2} and Lemma \ref{lem 3.1}.
	\begin{itemize}
		\item [(1)] $p=q=r=0$, $wt_L(c_a)=0$ and $f_1=1$.
		\item [(2)] $p=q=0,r\ne0$, $wt_L(c_a)=|L|-[(2^m-2^{|D|})(-2^{|E|}\chi(supp(r)|E))(2^m-2^{|F|})]$.
		\begin{itemize}
			\item If $\chi(supp(r)|E)=0$, then $wt_L(c_a)=|L|$ and $f_2=2^m-2^{m-|E|}$.
			\item If $\chi(supp(r)|E)=1$, then $wt_L(c_a)=|L|+[(2^m-2^{|D|})(2^{|E|})(2^m-2^{|F|})]$ and $f_3=2^{m-|E|}-1$.
		\end{itemize}
		\item [(3)] $p=0,q\ne0,r=0$, $wt_L(c_a)=|L|-\frac{1}{2}[(2^m-2^{|D|})(2^m-2^{|E|})(-2^{|F|}\chi(supp(q)|F))]-\frac{1}{2}[(2^m-2^{|D|})(-2^{|E|}\chi(supp(q)|E))(-2^{|F|}\chi(supp(q)|F))]$.
		\begin{itemize}
			\item If $\chi(supp(q)|E)=0$,  $\chi(supp(q)|F)=0$, then $wt_L(c_a)=|L|$ and $f_4=2^m-(2^{m-|E\cup F|})$ $[(2^{|E|-|E\cap F|})+(2^{|F|-|E\cap F|})-1]$.
			\item If $\chi(supp(q)|E)=1$, $\chi(supp(q)|F)=0$, then $wt_L(c_a)=|L|$ and $f_5=(2^{m-|E\cup F|})$ $(2^{|F|-|E\cap F|}-1)$.
			\item If $\chi(supp(q)|E)=0$,  $\chi(supp(q)|F)=1$, then $wt_L(c_a)=|L|+\frac{1}{2}[(2^m-2^{|D|})(2^m-2^{|E|})(2^{|F|})]$ and $f_6=(2^{m-|E\cup F|})(2^{|E|-|E\cap F|}-1)$.
			\item If $\chi(supp(q)|E)=1$, $\chi(supp(q)|F)=1$, $wt_L(c_a)=|L|+\frac{1}{2}[(2^m-2^{|D|})(2^m-2^{|E|})(2^{|F|})]-\frac{1}{2}[(2^m-2^{|D|})(2^{|E|})(2^{|F|})]$ and $f_7=2^{m-|E\cup F|}-1$.
		\end{itemize}
		\item [(4)] $p=0,q\ne0,r\ne0$, 	$wt_L(c_a)=|L|-\frac{1}{2}[(2^m-2^{|D|})(-2^{|E|}\chi(supp(r)|E))(-2^{|F|}\chi(supp(q)|F))]-\frac{1}{2}[(2^m-2^{|D|})(2^m\delta_{0,(q+r)}-2^{|E|}\chi(supp(q+r)|E))(-2^{|F|}\chi(supp(q)|F))]$.
		\begin{itemize}
			\item Let $q\ne r$, $wt_L(c_a)=|L|-\frac{1}{2}[(2^m-2^{|D|})(-2^{|E|}\chi(supp(r)|E))(-2^{|F|}\chi(supp(q)|F))]-\frac{1}{2}[(2^m-2^{|D|})(-2^{|E|}\chi(supp(q+r)|E))(-2^{|F|}\chi(supp(q)|F))]$.
			\begin{itemize}
				\item If $\chi(supp(q)|F)=0$, $\chi(supp(r)|E)=0$, $\chi(supp(q+r)|E)=0$, then $wt_L(c_a)=|L|$ and let $f_8$ be the frequency in this case.
				\item If $\chi(supp(q)|F)=0$, $\chi(supp(r)|E)=0$, $\chi(supp(q+r)|E)=1$, then $wt_L(c_a)=|L|$ and let $f_9$ be the frequency in this case.
				\item If $\chi(supp(q)|F)=0$, $\chi(supp(r)|E)=1$, $\chi(supp(q+r)|E)=0$, then $wt_L(c_a)=|L|$ and let $f_{10}$ be the frequency in this case.
				\item If $\chi(supp(q)|F)=0$, $\chi(supp(r)|E)=1$, $\chi(supp(q+r)|E)=1$, then $wt_L(c_a)=|L|$ and let $f_{11}$ be the frequency in this case. Now, $\sum\limits_{i=8}^{11}f_i=(2^m-2^{m-|F|})(2^m-2)$.
				\item If $\chi(supp(q)|F)=1$, $\chi(supp(r)|E)=0$, $\chi(supp(q+r)|E)=0$, then $wt_L(c_a)=|L|$ and $f_{12}=2^m(2^{m-|F|}-1)+2^{m-|E|}(1+2^{m-|E\cup F|}-2^{m+1-|F|})$.
				\item If $\chi(supp(q)|F)=1$, $\chi(supp(r)|E)=0$, $\chi(supp(q+r)|E)=1$, then $wt_L(c_a)=|L|-\frac{1}{2}[(2^m-2^{|D|})(2^{|E|})(2^{|F|})]$ and let $f_{13}$ be the frequency in this case.
				\item If $\chi(supp(q)|F)=1$, $\chi(supp(r)|E)=1$, $\chi(supp(q+r)|E)=0$, then $wt_L(c_a)=|L|-\frac{1}{2}[(2^m-2^{|D|})(2^{|E|})(2^{|F|})]$ and let $f_{14}$ be the frequency in this case. Now, $f_{13}+f_{14}=2(2^{m-|E|}-1)(2^{m-|F|}-2^{m-|E\cup F|})$.
				\item If $\chi(supp(q)|F)=1$, $\chi(supp(r)|E)=1$, $\chi(supp(q+r)|E)=1$, then $wt_L(c_a)=|L|-\frac{1}{2}[(2^m-2^{|D|})(2^{|E|})(2^{|F|})]-\frac{1}{2}[(2^m-2^{|D|})(2^{|E|})(2^{|F|})]$ and $f_{15}=(2^{m-|E|}-2)(2^{m-|E\cup F|}-1)$.
			\end{itemize}
			\item Let $q=r$, $wt_L(c_a)=|L|-\frac{1}{2}[(2^m-2^{|D|})(-2^{|E|}\chi(supp(r)|E))(-2^{|F|}\chi(supp(q)|F))]-\frac{1}{2}[(2^m-2^{|D|})(2^m-2^{|E|})(-2^{|F|}\chi(supp(q)|F))]$.
				\begin{itemize}
				\item If $\chi(supp(r)|E)=0$,  $\chi(supp(q)|F)=0$, then $wt_L(c_a)=|L|$ and $f_{16}=2^m-(2^{m-|E\cup F|})[(2^{|E|-|E\cap F|})+(2^{|F|-|E\cap F|})-1]$.
				\item If $\chi(supp(r)|E)=1$, $\chi(supp(q)|F)=0$, then $wt_L(c_a)=|L|$ and $f_{17}=(2^{m-|E\cup F|})$ $(2^{|F|-|E\cap F|}-1)$.
				\item If $\chi(supp(r)|E)=0$,  $\chi(supp(q)|F)=1$, then $wt_L(c_a)=|L|+\frac{1}{2}[(2^m-2^{|D|})(2^m-2^{|E|})(2^{|F|})]$ and $f_{18}=(2^{m-|E\cup F|})(2^{|E|-|E\cap F|}-1)$.
				\item If $\chi(supp(r)|E)=1$, $\chi(supp(q)|F)=1$, $wt_L(c_a)=|L|-\frac{1}{2}[(2^m-2^{|D|})(2^{|E|})(2^{|F|})]+\frac{1}{2}[(2^m-2^{|D|})(2^m-2^{|E|})(2^{|F|})]$ and $f_{19}=2^{m-|E\cup F|}-1$.
			\end{itemize}
		\end{itemize}
		\item [(5)] $p\ne0,q=r=0$, $wt_L(c_a)=|L|+[(2^{|D|}\chi(supp(p)|D))(2^m-2^{|E|})(2^m-2^{|F|})]$.
		\begin{itemize}
			\item If $\chi(supp(p)|D)=0$, then $wt_L(c_a)=|L|$ and $f_{20}=2^m-2^{m-|D|}$.
			\item If $\chi(supp(p)|D)=1$, then $wt_L(c_a)=|L|+[(2^{|D|})(2^m-2^{|E|})(2^m-2^{|F|})]$ and $f_{21}=2^{m-|D|}-1$.
		\end{itemize}
		\item [(6)] $p\ne0,q=0,r\ne0$, $wt_L(c_a)=|L|-\frac{1}{2}[(-2^{|D|}\chi(supp(p)|D))(-2^{|E|}\chi(supp(r)|E))(2^m-2^{|F|})]-\frac{1}{2}[(-2^{|D|}\chi(supp(p)|D))(-2^{|E|}\chi(supp(r)|E))(2^m-2^{|F|})]$.
			\begin{itemize}
			\item If $\chi(supp(p)|D)=0$,  $\chi(supp(r)|E)=0$, then $wt_L(c_a)=|L|$ and $f_{22}=(2^m-2^{m-|D|})(2^m-2^{m-|E|})$.
			\item If If $\chi(supp(p)|D)=1$,  $\chi(supp(r)|E)=0$, then $wt_L(c_a)=|L|$ and $f_{23}=(2^{m-|D|}-1)(2^m-2^{m-|E|})$.
			\item If $\chi(supp(p)|D)=0$,  $\chi(supp(r)|E)=1$, then $wt_L(c_a)=|L|$ and $f_{24}=(2^m-2^{m-|D|})$ $(2^{m-|E|}-1)$.
			\item If $\chi(supp(p)|D)=1$,  $\chi(supp(r)|E)=1$, $wt_L(c_a)=|L|-[(2^{|D|})(2^{|E|})(2^m-2^{|F|})]$ and $f_{25}=(2^{m-|D|}-1)(2^{m-|E|}-1)$.
			\end{itemize}
		\item [(7)] $p\ne0,q\ne0,r=0$, $wt_L(c_a)=|L|-\frac{1}{2}[(-2^{|D|}\chi(supp(p)|D))(2^m-2^{|E|})(-2^{|F|}\chi(supp(q)|F))]-\frac{1}{2}[(-2^{|D|}\chi(supp(p)|D))(-2^{|E|}\chi(supp(q)|E))(-2^{|F|}\chi(supp(q)|F))]$.
		\begin{itemize}
			\item If $\chi(supp(p)|D)=0$,  $\chi(supp(q)|F)=0$, $\chi(supp(q)|E)=0$, then $wt_L(c_a)=|L|$ and let $f_{26}$ be the frequency in this case.
			\item If $\chi(supp(p)|D)=0$,  $\chi(supp(q)|F)=0$, $\chi(supp(q)|E)=1$, then $wt_L(c_a)=|L|$ and let $f_{27}$ be the frequency in this case.
			\item If $\chi(supp(p)|D)=0$,  $\chi(supp(q)|F)=1$, $\chi(supp(q)|E)=0$, then $wt_L(c_a)=|L|$ and let $f_{28}$ be the frequency in this case.
			\item If $\chi(supp(p)|D)=0$,  $\chi(supp(q)|F)=1$, $\chi(supp(q)|E)=1$, then $wt_L(c_a)=|L|$ and let $f_{29}$ be the frequency in this case. Now, $\sum\limits_{i=26}^{29}f_i=(2^m-2^{m-|D|})(2^m-1)$.
			\item If $\chi(supp(p)|D)=1$,  $\chi(supp(q)|F)=0$, $\chi(supp(q)|E)=0$, then $wt_L(c_a)=|L|$ and $f_{30}=(2^{m-|D|}-1)[2^m-(2^{m-|E\cup F|})\{(2^{|E|-|E\cap F|})+(2^{|F|-|E\cap F|})-1\}]$.
			\item If $\chi(supp(p)|D)=1$,  $\chi(supp(q)|F)=0$, $\chi(supp(q)|E)=1$, then $wt_L(c_a)=|L|$ and $f_{31}=(2^{m-|D|}-1)[(2^{m-|E\cup F|})(2^{|F|-|E\cap F|}-1)]$.
			\item If $\chi(supp(p)|D)=1$,  $\chi(supp(q)|F)=1$, $\chi(supp(q)|E)=0$, then $wt_L(c_a)=|L|-\frac{1}{2}[(2^{|D|})(2^m-2^{|E|})(2^{|F|})]$ and $f_{32}=(2^{m-|D|}-1)[(2^{m-|E\cup F|})(2^{|E|-|E\cap F|}-1)]$.
			\item If $\chi(supp(p)|D)=1$,  $\chi(supp(q)|F)=1$, $\chi(supp(q)|E)=1$, then $wt_L(c_a)=|L|-\frac{1}{2}[(2^{|D|})(2^m-2^{|E|})(2^{|F|})]+\frac{1}{2}[(2^{|D|})(2^{|E|})(2^{|F|})]$ and $f_{33}=(2^{m-|D|}-1)[2^{m-|E\cup F|}-1]$.
		\end{itemize}
		\item [(8)] $p\ne0,q\ne0,r\ne0$, $wt_L(c_a)=|L|-\frac{1}{2}[(-2^{|D|}\chi(supp(p)|D))(-2^{|E|}\chi(supp(r)|E))(-2^{|F|}\\\chi(supp(q)|F))]-\frac{1}{2}[(-2^{|D|}\chi(supp(p)|D))(2^m\delta_{0,(q+r)}-2^{|E|}\chi(supp(q+r)|E))(-2^{|F|}\chi(supp(q)|F))]$.
		\begin{itemize} 
			\item Let $q\ne r$, $wt_L(c_a)=|L|-\frac{1}{2}[(-2^{|D|}\chi(supp(p)|D))(-2^{|E|}\chi(supp(r)|E))\\(-2^{|F|}\chi(supp(q)|F))]-\frac{1}{2}[(-2^{|D|}\chi(supp(p)|D))(-2^{|E|}\chi(supp(q+r)|E))(-2^{|F|}\chi(supp(q)|F))]$.
			\begin{itemize}
				\item If $\chi(supp(p)|D)=0$,  $\chi(supp(q)|F)=0$, $\chi(supp(r)|E)=0$,
				$\chi(supp(q+r)|E)=0$, then $wt_L(c_a)=|L|$ and let $f_{34}$ be the frequency in this case.
				\item If $\chi(supp(p)|D)=0$,  $\chi(supp(q)|F)=0$, $\chi(supp(r)|E)=0$,
				$\chi(supp(q+r)|E)=1$, then $wt_L(c_a)=|L|$ and let $f_{35}$ be the frequency in this case.
				\item If $\chi(supp(p)|D)=0$,  $\chi(supp(q)|F)=0$, $\chi(supp(r)|E)=1$,
				$\chi(supp(q+r)|E)=0$, then $wt_L(c_a)=|L|$ and let $f_{36}$ be the frequency in this case.
				\item If $\chi(supp(p)|D)=0$,  $\chi(supp(q)|F)=0$, $\chi(supp(r)|E)=1$,
				$\chi(supp(q+r)|E)=1$, then $wt_L(c_a)=|L|$ and let $f_{37}$ be the frequency in this case. 
				\item If $\chi(supp(p)|D)=0$,  $\chi(supp(q)|F)=1$, $\chi(supp(r)|E)=0$,
				$\chi(supp(q+r)|E)=0$, then $wt_L(c_a)=|L|$ and let $f_{38}$ be the frequency in this case.
				\item If $\chi(supp(p)|D)=0$,  $\chi(supp(q)|F)=1$, $\chi(supp(r)|E)=0$,
				$\chi(supp(q+r)|E)=1$, then $wt_L(c_a)=|L|$ and let $f_{39}$ be the frequency in this case.
				\item If $\chi(supp(p)|D)=0$,  $\chi(supp(q)|F)=1$, $\chi(supp(r)|E)=1$,
				$\chi(supp(q+r)|E)=0$, then $wt_L(c_a)=|L|$ and let $f_{40}$ be the frequency in this case.
				\item If $\chi(supp(p)|D)=0$,  $\chi(supp(q)|F)=1$, $\chi(supp(r)|E)=1$,
				$\chi(supp(q+r)|E)=1$, then $wt_L(c_a)=|L|$ and let $f_{41}$ be the frequency in this case. 
				Now, $\sum\limits_{i=34}^{41}f_i=(2^m-2^{m-|D|})(2^m-1)(2^m-2)$.
				\item If $\chi(supp(p)|D)=1$,  $\chi(supp(q)|F)=0$, $\chi(supp(r)|E)=0$,
				$\chi(supp(q+r)|E)=0$, then $wt_L(c_a)=|L|$ and let $f_{42}$ be the frequency in this case.
				\item If $\chi(supp(p)|D)=1$,  $\chi(supp(q)|F)=0$, $\chi(supp(r)|E)=0$,
				$\chi(supp(q+r)|E)=1$, then $wt_L(c_a)=|L|$ and let $f_{43}$ be the frequency in this case.
				\item If $\chi(supp(p)|D)=1$,  $\chi(supp(q)|F)=0$, $\chi(supp(r)|E)=1$,
				$\chi(supp(q+r)|E)=0$, then $wt_L(c_a)=|L|$ and let $f_{44}$ be the frequency in this case.
				\item If $\chi(supp(p)|D)=1$,  $\chi(supp(q)|F)=0$, $\chi(supp(r)|E)=1$,
				$\chi(supp(q+r)|E)=1$, then $wt_L(c_a)=|L|$ and let $f_{45}$ be the frequency in this case. Now, $\sum\limits_{i=42}^{45}f_i=(2^{m-|D|}-1)[(2^m-2^{m-|F|})(2^m-2)]$.
				\item If $\chi(supp(p)|D)=1$,  $\chi(supp(q)|F)=1$, $\chi(supp(r)|E)=0$,
				$\chi(supp(q+r)|E)=0$, then $wt_L(c_a)=|L|$ and $f_{46}=(2^{m-|D|}-1)[2^m(2^{m-|F|}-1)+2^{m-|E|}(1+2^{m-|E\cup F|}-2^{m+1-|F|})]$.
				\item If $\chi(supp(p)|D)=1$,  $\chi(supp(q)|F)=1$, $\chi(supp(r)|E)=0$,
				$\chi(supp(q+r)|E)=1$, then $wt_L(c_a)=|L|+\frac{1}{2}[(2^{|D|})(2^{|E|})(2^{|F|})]$ and let $f_{47}$ be the frequency in this case.
				\item If $\chi(supp(p)|D)=1$,  $\chi(supp(q)|F)=1$, $\chi(supp(r)|E)=1$,
				$\chi(supp(q+r)|E)=0$, then $wt_L(c_a)=|L|+\frac{1}{2}[(2^{|D|})(2^{|E|})(2^{|F|})]$ and let $f_{48}$ be the frequency in this case. Now $f_{47}+f_{48}=(2^{m-|D|}-1)[2(2^{m-|E|}-1)(2^{m-|F|}-2^{m-|E\cup F|})]$.
				\item If $\chi(supp(p)|D)=1$,  $\chi(supp(q)|F)=1$, $\chi(supp(r)|E)=1$,
				$\chi(supp(q+r)|E)=1$, then $wt_L(c_a)=|L|+[(2^{|D|})(2^{|E|})(2^{|F|})]$ and $f_{49}=(2^{m-|D|}-1)[(2^{m-|E|}-2)(2^{m-|E\cup F|}-1)]$.
			\end{itemize}
			\item Let $q=r$, $wt_L(c_a)=|L|-\frac{1}{2}[(-2^{|D|}\chi(supp(p)|D))(-2^{|E|}\chi(supp(r)|E))\\(-2^{|F|}\chi(supp(q)|F))]-\frac{1}{2}[(-2^{|D|}\chi(supp(p)|D))(2^m-2^{|E|})(-2^{|F|}\chi(supp(q)|F))]$.
			\begin{itemize}
				\item If $\chi(supp(p)|D)=0$,  $\chi(supp(q)|F)=0$, $\chi(supp(r)|E)=0$, then $wt_L(c_a)=|L|$ and let $f_{50}$ be the frequency in this case.
				\item If $\chi(supp(p)|D)=0$,  $\chi(supp(q)|F)=0$, $\chi(supp(r)|E)=1$, then $wt_L(c_a)=|L|$ and let $f_{51}$ be the frequency in this case.
				\item If $\chi(supp(p)|D)=0$,  $\chi(supp(q)|F)=1$, $\chi(supp(r)|E)=0$, then $wt_L(c_a)=|L|$ and let $f_{52}$ be the frequency in this case.
				\item If $\chi(supp(p)|D)=0$,  $\chi(supp(q)|F)=1$, $\chi(supp(r)|E)=1$, then $wt_L(c_a)=|L|$ and let $f_{53}$ be the frequency in this case. Now, $\sum\limits_{i=50}^{53}f_i=(2^m-2^{m-|D|})(2^m-1)$.
				\item If $\chi(supp(p)|D)=1$,  $\chi(supp(q)|F)=0$, $\chi(supp(r)|E)=0$, then $wt_L(c_a)=|L|$ and $f_{54}=(2^{m-|D|}-1)[2^m-(2^{m-|E\cup F|})\{(2^{|E|-|E\cap F|})+(2^{|F|-|E\cap F|})-1\}]$.
				\item If $\chi(supp(p)|D)=1$,  $\chi(supp(q)|F)=0$, $\chi(supp(r)|E)=1$, then $wt_L(c_a)=|L|$ and $f_{55}=(2^{m-|D|}-1)[(2^{m-|E\cup F|})(2^{|F|-|E\cap F|}-1)]$.
				\item If $\chi(supp(p)|D)=1$,  $\chi(supp(q)|F)=1$, $\chi(supp(r)|E)=0$, then $wt_L(c_a)=|L|-\frac{1}{2}[(2^{|D|})(2^m-2^{|E|})(2^{|F|})]$ and $f_{56}=(2^{m-|D|}-1)[(2^{m-|E\cup F|})(2^{|E|-|E\cap F|}-1)]$.
				\item If $\chi(supp(p)|D)=1$,  $\chi(supp(q)|F)=1$, $\chi(supp(r)|E)=1$, then $wt_L(c_a)=|L|+\frac{1}{2}[(2^{|D|})(2^{|E|})(2^{|F|})]-\frac{1}{2}[(2^{|D|})(2^m-2^{|E|})(2^{|F|})]$ and $f_{57}=(2^{m-|D|}-1)[2^{m-|E\cup F|}-1]$.
			\end{itemize}
		\end{itemize}
	\end{itemize}

We must note that $|Ker(f)|=|\{c_a~|~wt_L(c_a)=0\}|$. So, to compute the size of $C_L$, first we observe from Table \ref{Tab 3.1} that $wt_L(c_a)=0$, i.e., $((a\cdot l))_{l\in L}=0$ in the following three cases depending on the choices of $|D|,~|E|,~|F|$.
\begin{itemize}
	\item $a=0$ for all $|D|,~|E|,~|F|$. In this case $f_0=1$.
	\item $a=(0,q+ur)$ satisfying $\chi(supp(q)|F)=1$, $\chi(supp(r)|E)=1$, $\chi(supp(q+r)|E)=1$ for $|E|=|F|=m-1$. In this case $f_{15}=0$.
	\item $a=(p,ur)$ satisfying $\chi(supp(p)|D)=1$,  $\chi(supp(r)|E)=1$ for $|D|=|E|=m-1$. In this case $f_{25}=1$.
\end{itemize}
So, we have 
	\begin{equation*}
		|Ker(f)|=\begin{cases}
		2&if~|D|=|E|=m-1,\\
		1&otherwise.
		\end{cases}
	\end{equation*}
This gives the size of $C_L$.
\end{proof}

%

\begin{table}[h]
	\begin{center}
		\caption{Lee weight distribution of $C_L$} \label{Tab 3.1}
		\begin{tabular}{|c|c|c|}
			\hline
			Lee weight&frequency\\\hline
			$0$&$f_1=1$\\\hline
			$|L|$&$\sum\limits_{i=2}^{i=57}f_i,~i\ne3,6,7,13,14,15,18,19,$\\
			&$21,25,32,33,47,48,49,56,57$\\\hline
			$|L|+(2^m-2^{|D|})(2^{|E|})(2^m-2^{|F|})$&$f_3$\\\hline
			$|L|+\frac{1}{2}[(2^m-2^{|D|})(2^m-2^{|E|})(2^{|F|})]$&$f_6+f_{18}$\\\hline
			$|L|+\frac{1}{2}[(2^m-2^{|D|})(2^m-2^{|E|})(2^{|F|})-(2^m-2^{|D|})(2^{|E|})(2^{|F|})]$&$f_7+f_{19}$\\\hline
			$|L|-\frac{1}{2}[(2^m-2^{|D|})(2^{|E|})(2^{|F|})]$&$f_{13}+f_{14}$\\\hline
			$|L|-(2^m-2^{|D|})(2^{|E|})(2^{|F|})$&$f_{15}$\\\hline
			$|L|+(2^{|D|})(2^m-2^{|E|})(2^m-2^{|F|})$&$f_{21}$\\\hline
			$|L|-(2^{|D|})(2^{|E|})(2^m-2^{|F|})$&$f_{25}$\\\hline
			$|L|-\frac{1}{2}[(2^{|D|})(2^m-2^{|E|})(2^{|F|})]$&$f_{32}+f_{56}$\\\hline
			$|L|-\frac{1}{2}[(2^{|D|})(2^m-2^{|E|})(2^{|F|})-(2^{|D|})(2^{|E|})(2^{|F|})]$&$f_{33}+f_{57}$\\\hline
			$|L|+\frac{1}{2}[(2^{|D|})(2^{|E|})(2^{|F|})]$&$f_{47}+f_{48}$\\\hline
			$|L|+(2^{|D|})(2^{|E|})(2^{|F|})$&$f_{49}$\\\hline
		\end{tabular}
	\end{center}
\end{table}

\begin{table}[h]
	\begin{center}
		\caption{Lee weight distribution of $C_L$ if $|D|=|E|=|F|=m-1$} \label{Tab 3.2}
		\begin{tabular}{|c|c|c|}
			\hline
			Lee weight&frequency\\\hline
			$0$&$\frac{1}{2}(f_1+f_{15}+f_{25})=1$\\\hline
			$|L|=2^{3m-3}$&$\frac{1}{2}\sum\limits_{i=2}^{i=57}f_i,~(i\ne3,6,13,14,15,18,$\\
			&$21,25,32,47,48,49,56)=2^{3m-1}-6$\\\hline
			$|L|+(2^m-2^{|D|})(2^{|E|})(2^m-2^{|F|})=2^{3m-2}$&$\frac{1}{2}(f_3+f_{21}+f_{49})=1$\\\hline
			$|L|+\frac{1}{2}[(2^m-2^{|D|})(2^m-2^{|E|})(2^{|F|})]=2^{3m-3}+2^{3m-4}$&$\frac{1}{2}(f_6+f_{18}+f_{47}+f_{48})=2$\\\hline
			$|L|-\frac{1}{2}[(2^m-2^{|D|})(2^{|E|})(2^{|F|})]=2^{3m-4}$&$\frac{1}{2}(f_{13}+f_{14}+f_{32}+f_{56})=2$\\\hline
		\end{tabular}
	\end{center}
\end{table}

\begin{remark}\label{rem 3.3}
	Let $|D|=|E|=m-1$, i.e., $|Ker(f)|=2$. Then for each $a\in\mathcal{R}^m$, there exists $b\in\mathcal{R}^m$ such that $((a\cdot l))_{l\in L}=((b\cdot l))_{l\in L}$ as this happens if and only if $(((a-b)\cdot l))_{l\in L}=0$ (as $C_L$ is linear), i.e., $(a-b)\in Ker(f)$ and $|Ker(f)|=2$. So, in this case the frequencies in Table \ref{Tab 3.1} corresponding to a particular Lee weight will be half of that. In Table \ref{Tab 3.2}, the Lee weight distribution of $C_L$ is given for $|D|=|E|=|F|=m-1$.
\end{remark}

\begin{example}
	Let  $m=3$ and $D=\{1,2\}$, $E=\{1,3\}$, $F=\{2,3\}$. Then the Lee weight enumerator of the code $C_L$ (as in Theorem \ref{thm 3.2} and Table \ref{Tab 3.2}), denoted by $lwe_{C_L}(X,Y)$, is given by $lwe_{C_L}(X,Y)=X^{128}+2X^{96}Y^{32}+250X^{64}Y^{64}+2X^{32}Y^{96}+Y^{128}$.
\end{example}

\begin{example}
	Let  $m=3$ and $D=\{1\}$, $E=\{2\}$, $F=\{3\}$. Then the Lee weight enumerator of the code $C_L$ (as in Theorem \ref{thm 3.2} and Table \ref{Tab 3.1}), denoted by $lwe_{C_L}(X,Y)$, is given by $lwe_{C_L}(X,Y)=X^{432}+11X^{240}Y^{192}+24X^{228}Y^{204}+6X^{224}Y^{208}+416X^{216}Y^{216}+36X^{212}Y^{220}+6X^{208}Y^{224}+2X^{192}Y^{240}+4X^{180}Y^{252}+6X^{144}Y^{288}$.
\end{example}

\begin{example}
	Let  $m=4$ and $D=\{1,2,3\}$, $E=\{1,2,4\}$, $F=\{1,3,4\}$. Then the Lee weight enumerator of the code $C_L$ (as in Theorem \ref{thm 3.2} and Table \ref{Tab 3.2}), denoted by $lwe_{C_L}(X,Y)$, is given by $lwe_{C_L}(X,Y)=X^{1024}+2X^{768}Y^{256}+2042X^{512}Y^{512}+2X^{256}Y^{768}+Y^{1024}$.
\end{example}

\begin{example}
Let  $m=4$ and $D=\{1,2\}$, $E=\{2,3\}$, $F=\{3,4\}$. Then the Lee weight enumerator of the code $C_L$ (as in Theorem \ref{thm 3.2} and Table \ref{Tab 3.1}), denoted by $lwe_{C_L}(X,Y)$, is given by $lwe_{C_L}(X,Y)=X^{3456}+11X^{1920}Y^{1536}+24X^{1824}Y^{1632}+6X^{1792}Y^{1664}+4000X^{1728}Y^{1728}+36X^{1696}Y^{1760}+6X^{1664}Y^{1792}+2X^{1536}Y^{1920}+4X^{1440}Y^{2016}+6X^{1152}Y^{2304}$.
\end{example}

\section{Gray images and Minimal codes}\label{section 4}

First of all we observe that: Since $\Phi$ is a linear map and $C_L$ is a linear code over $\Z_2[u]$, $\Phi(C_L)$ is linear over $\Z_2$. Next we will use some characterizations of binary linear codes regarding orthogonality and minimality.

\begin{proposition}
	Let $C_L$ be the linear code over $\Z_2[u]$ as in Theorem \ref{thm 3.2}. Then the binary linear code $\phi(C_L)$ is self-orthogonal with respect to the Euclidean inner product if $|D|,|E|,|F|\ne0$.
\end{proposition}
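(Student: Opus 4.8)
The plan is to reduce self-orthogonality to a $2$-adic divisibility property of the Lee weights that is already visible in Table~\ref{Tab 3.1}. I would start from two standing facts. First, $\Phi$ is a Hamming--Lee isometry, so for every codeword $c_a$ of $C_L$ we have $wt_H(\Phi(c_a))=wt_L(c_a)$; thus the Hamming weight enumerator of the binary linear code $\Phi(C_L)$ coincides with the Lee weight enumerator of $C_L$ computed in Theorem~\ref{thm 3.2}. Second, a binary linear code all of whose codewords have Hamming weight divisible by $4$ is self-orthogonal: for codewords $x,y$, the identity $wt_H(x+y)=wt_H(x)+wt_H(y)-2\,|supp(x)\cap supp(y)|$ together with $x+y$ again being a codeword forces $|supp(x)\cap supp(y)|$ to be even, i.e. $\langle x,y\rangle=0$ in $\Z_2$ (the diagonal case $x=y$ is the statement $4\mid wt_H(x)$ itself). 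Hence it suffices to show that, under the hypothesis $|D|,|E|,|F|\ge1$, every Lee weight appearing in Table~\ref{Tab 3.1} is divisible by $4$.

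For this, I would observe that when $|D|,|E|,|F|\ge1$ each of the six quantities $2^{|D|},\ 2^{|E|},\ 2^{|F|},\ 2^m-2^{|D|},\ 2^m-2^{|E|},\ 2^m-2^{|F|}$ is even; for instance $2^m-2^{|D|}=2^{|D|}(2^{m-|D|}-1)$. Reading off Table~\ref{Tab 3.1}, the length $|L|=(2^m-2^{|D|})(2^m-2^{|E|})(2^m-2^{|F|})$ is a product of three such even integers, hence divisible by $8$; and every other entry of the weight column is $|L|$ plus or minus either a product of three of the six quantities, or one half of such a product, or one half of a difference of two such products. A product of three even integers is divisible by $8$, a difference of two such products is divisible by $8$, and therefore one half of either is still divisible by $4$; adding $|L|$ (divisible by $8$) preserves divisibility by $4$. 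So all Lee weights of $C_L$ are $\equiv 0\pmod 4$, which finishes the argument by the first paragraph.

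I expect the delicate point to be the hypothesis itself rather than any computation: the condition $|D|,|E|,|F|\ne0$ is exactly what guarantees $|D|+|E|+|F|\ge3$, and this is precisely the slack needed to absorb the factor $\frac12$ that occurs in several rows of Table~\ref{Tab 3.1} while keeping divisibility by $4$; if one of $D,E,F$ were empty, the corresponding factor $2^0=1$ would be odd and a weight of the form $|L|\pm\frac12(\text{odd}\cdot\text{even}\cdot\text{even})$ need only be divisible by $2$. Beyond that, the single mechanical step is to confirm that each of the finitely many distinct weight values in Table~\ref{Tab 3.1} really does have the claimed shape — a signed combination of products of the six even quantities above, possibly halved — which is a direct reading of the table; I would nonetheless run through all of them explicitly, since a single miscategorised entry would be fatal to the conclusion.
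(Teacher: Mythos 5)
Your proposal is correct and follows essentially the same route as the paper: reduce self-orthogonality to the fact that a binary linear code whose weights are all divisible by $4$ is self-orthogonal (the paper cites \cite[Theorem 1.4.8]{huffman2010book} where you prove it inline), then check from Table \ref{Tab 3.1} that the hypothesis $|D|,|E|,|F|\ge 1$ makes every Lee weight a multiple of $4$. Your explicit accounting of the six even factors and the absorbed $\tfrac12$, and your remark on why the hypothesis is needed, merely spell out what the paper leaves as ``easy to observe.''
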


\begin{proof}
	Let $0<|D|,|E|,|F|\le m-1$. We know that if all the codewords of a binary linear codes is of weight $4k$ for some $k\in\N$, then the linear code is self-orthogonal (see (ii) of \cite[Theorem 1.4.8]{huffman2010book}). Now, $(2^m-2^{|D|})\ge2^{|D|}\ge2$ (as $m>1$ and $0<|D|\le m-1$). So, from Table \ref{Tab 3.1}, it is easy to observe that 4 divides all the Lee weights. Hence $\phi(C_L)$ is self-orthogonal.
\end{proof}

\begin{example}
	Let $|D|=|E|=|F|=m-1$. Then $\phi(C_L)$ is self-orthogonal with parameters $[2^{3m-2},3m-1,2^{3m-4}]$.
\end{example}

\begin{example}
	Let $|D|=|E|=|F|=n<m-1$. Then $\phi(C_L)$ has parameters $[2(2^m-2^n)^3,3m,(2^m-2^n)(2^m)(2^m-2^{n+1})]$. Moreover, if $n\ne0$, $\phi(C_L)$ is self orthogonal.
\end{example}

We call a linear code $C$ over $\F_p$ is minimal if all of its nonzero codewords are minimal. Further a codeword $c\in C$ is said to be minimal if $supp(c')\subseteq supp(c)$ if and only if $c'=ac$, where $a(\ne0)\in\F_p$. In general, it is hard to determine whether a linear code is minimal or not but the famous Ashikhmin-Barg condition (\cite{barg1998minimal}) is a sufficient condition for a linear code to be minimal and it is relatively easy to check.

\begin{lemma}(Ashikhmin-Barg condition for p=2) \cite{barg1998minimal}\label{lem 4.1}
	Let $C$ be a linear code over $\Z_2$. Denote $w_0$ as the minimum(nonzero) and $w_{\infty}$ as the maximum Hamming weights. If $\frac{w_0}{w_{\infty}}>\frac{1}{2}$, then $C$ is minimal.
\end{lemma}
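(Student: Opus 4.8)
The plan is to prove the statement directly: assuming $w_0/w_\infty>1/2$, I would show that every nonzero codeword of $C$ is minimal. Recall that for $p=2$ the only nonzero scalar in $\Z_2$ is $1$, so a nonzero codeword $c$ is minimal precisely when the only codewords $c'$ satisfying $supp(c')\subseteq supp(c)$ are $c'=0$ and $c'=c$. Hence it suffices to rule out the existence of a codeword $c'$ with $supp(c')\subseteq supp(c)$, $c'\ne 0$, and $c'\ne c$, and to derive a contradiction with the hypothesis on the weights.

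First I would fix such a hypothetical $c'$ and examine $c+c'$, which is again a codeword since $C$ is linear. Over $\Z_2$, the nesting $supp(c')\subseteq supp(c)$ forces $supp(c+c')=supp(c)\setminus supp(c')$, because a coordinate of $c+c'$ is nonzero exactly where $c$ is $1$ and $c'$ is $0$. This yields the key weight identity $wt_H(c+c')=wt_H(c)-wt_H(c')$, equivalently $wt_H(c)=wt_H(c')+wt_H(c+c')$.

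Next I would apply the extremal weight bounds. Since $c'\ne 0$ and $c'\ne c$ (so that $c+c'\ne 0$), both $c'$ and $c+c'$ are nonzero codewords, and therefore each has Hamming weight at least $w_0$. Substituting into the identity gives $wt_H(c)\ge 2w_0$, while on the other hand $wt_H(c)\le w_\infty$ by definition of $w_\infty$. Hence $2w_0\le w_\infty$, that is $w_0/w_\infty\le 1/2$, contradicting the hypothesis $w_0/w_\infty>1/2$. Therefore no such $c'$ can exist, every nonzero $c\in C$ is minimal, and $C$ is a minimal code.

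The argument is short and requires no heavy computation; the single point needing care is the support identity over $\Z_2$, which relies crucially on the characteristic being $2$ so that overlapping $1$'s cancel. I expect the only (very mild) obstacle to be making explicit that for $p=2$ the sole candidate nonzero scalar multiple of $c$ is $c$ itself, so that the definition of minimality collapses to the two-element statement used above and the bound $wt_H(c)\ge 2w_0$ closes the proof.
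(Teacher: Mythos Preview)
Your proof is correct and is the standard argument for the Ashikhmin--Barg condition in the binary case. Note, however, that the paper does not actually prove this lemma: it is stated with a citation to \cite{barg1998minimal} and used as a black box, so there is no proof in the paper to compare against. Your write-up supplies exactly the short self-contained verification one would expect, and all the steps (the support identity $supp(c+c')=supp(c)\setminus supp(c')$ under the inclusion $supp(c')\subseteq supp(c)$, the resulting equality $wt_H(c)=wt_H(c')+wt_H(c+c')$, and the lower bound $wt_H(c)\ge 2w_0$) are valid.
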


\begin{theorem}\label{thm 4.2}
	Let $C_L$ be the linear code over $\Z_2[u]$ as in Theorem \ref{thm 3.2} with $|D|=|E|=|F|=n(<m)$. Then the code $\Phi(C_L)$ is minimal if $n\le m-2$.
\end{theorem}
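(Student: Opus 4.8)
The plan is to apply the Ashikhmin--Barg condition (Lemma \ref{lem 4.1}) to the binary code $\Phi(C_L)$, so I need to identify its minimum and maximum nonzero Hamming weights, which are the minimum and maximum nonzero Lee weights of $C_L$. First I would specialize Table \ref{Tab 3.1} (or Table \ref{Tab 3.2} when $n=m-1$) to the case $|D|=|E|=|F|=n$. Writing $N = 2^m - 2^n$, the length is $|L| = 2N^3$ and $2^{|D|}=2^{|E|}=2^{|F|}=2^n$. The entries of the weight table then become expressions of the form $|L|\pm(\text{product of three factors each equal to }N, 2^n,\text{ or }2^m-2^{n+1})$ and their halves; I would list all of them explicitly: the candidate nonzero weights are $2N^3$ itself together with $2N^3 + N\cdot 2^n\cdot N$, $2N^3 \pm \tfrac12 N\cdot N\cdot 2^n$, $2N^3 \pm N\cdot 2^n\cdot 2^n$ (with $N$ possibly replaced by $2^m-2^{n+1}$ in one factor), $2N^3 \pm \tfrac12 \cdot 2^n N 2^n$, $2N^3 + 2^n N N$, $2N^3 - 2^n 2^n N$, $2N^3 \pm 2^n 2^n 2^n$, and the combinations appearing in the two-term rows.

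Next I would determine $w_\infty$ and $w_0$ among these. The largest weight should be $w_\infty = 2N^3 + 2^{|D|}2^{|E|}2^{|F|} = 2N^3 + 2^{3n}$ (row $f_{49}$ of Table \ref{Tab 3.1}), since every other ``$+$'' perturbation has a factor of $N \ge 2^n$ or a leading $\tfrac12$ making it no larger; I would check this by a short monotonicity comparison of the perturbation terms. The smallest nonzero weight should be $w_0 = 2N^3 - 2^{|D|}2^{|E|}2^{|F|} = 2N^3 - 2^{3n}$ (row $f_{25}$), again after checking that the ``$-$'' perturbations $-2^{3n}$, $-\tfrac12 2^{3n}$, $-2^n N N$, $-\tfrac12 N N 2^n$, $-2^n 2^n N$ are dominated by $-2^{3n}$ in absolute value (because $N \ge 2^n$) and that no perturbation can drive the weight to $0$ — indeed $2N^3 - 2^{3n} = (2^m-2^n)^3 + (2^m-2^n)^3 - 2^{3n} > 0$ since $(2^m - 2^n)^3 \ge 2^{3n}$ as $2^m - 2^n \ge 2^n$ for $m > n$, wait I should be careful when $m = n+1$; there $N = 2^n$ and $2N^3 - 2^{3n} = 2^{3n+1} - 2^{3n} = 2^{3n} > 0$, so it is still fine, and this weight is strictly positive in all cases $n \le m-1$.

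Then the condition to verify is $\dfrac{w_0}{w_\infty} = \dfrac{2N^3 - 2^{3n}}{2N^3 + 2^{3n}} > \dfrac12$, equivalently $4N^3 - 2^{3n+1} > 2N^3 + 2^{3n}$, i.e. $2N^3 > 3 \cdot 2^{3n}$, i.e. $(2^m - 2^n)^3 > \tfrac32 \, 2^{3n}$, i.e. $(2^{m-n} - 1)^3 > \tfrac32$. This holds precisely when $2^{m-n} - 1 \ge 2$, i.e. $m - n \ge 2$, i.e. $n \le m-2$; when $n = m-1$ we get $(2-1)^3 = 1 < \tfrac32$, which is why that case is excluded. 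This gives the claimed threshold, and minimality then follows from Lemma \ref{lem 4.1}.

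**Main obstacle.** The substantive work is the second step: confirming that the extremal weights are really $2N^3 \pm 2^{3n}$ and not one of the many other table entries, and in particular that none of the half-integer perturbations or the mixed $N$-versus-$(2^m-2^{n+1})$ factors produces something larger in magnitude. This is a finite but slightly tedious case inspection over the thirteen weight values in Table \ref{Tab 3.1}; the key inequality making it work is $N = 2^m - 2^n \ge 2^n$ (valid since $m > n$), which ensures every competing perturbation term is bounded by $2^{3n}$ in absolute value. Once that ordering is pinned down, the final ratio computation is immediate.
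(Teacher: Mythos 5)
Your overall strategy is the same as the paper's: specialize Table \ref{Tab 3.1} to $|D|=|E|=|F|=n$, identify the extremal Lee weights $w_0$ and $w_\infty$, and check the Ashikhmin--Barg ratio. However, the execution contains a genuine error: your identification of the extremal weights is inverted. Writing $N=2^m-2^n$ and $A=2^n$, the key inequality $N\ge A$ makes a perturbation term \emph{larger} in absolute value each time a factor $A$ is replaced by $N$, not smaller as you assert. Consequently the largest ``$+$'' perturbation is $+N^2A$ (rows $f_3$ and $f_{21}$), not $+A^3$ (row $f_{49}$), and the most negative perturbation is $-NA^2$ (rows $f_{15}$ and $f_{25}$; note also that row $f_{25}$ reads $-2^{|D|}2^{|E|}(2^m-2^{|F|})=-A^2N$, not $-A^3$ as you quote it). The correct extremes are $w_0=|L|-NA^2=(2^m-2^n)(2^m)(2^m-2^{n+1})$ and $w_\infty=|L|+N^2A=(2^m-2^n)^2\,2^m$, giving $w_0/w_\infty=(2^m-2^{n+1})/(2^m-2^n)$, which exceeds $\tfrac12$ exactly when $n\le m-2$. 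You can sanity-check this against the example $m=3$, $n=1$: the enumerator shows minimum nonzero weight $192$ and maximum $288$, whereas your candidates ($|L|\pm 2^{3n}=216\pm 8$, or $2N^3\pm 8$ after your shift) are not the extremes (and $2N^3+8$ exceeds the Gray length).

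There is a second, independent slip: the table's $|L|$ is the length of $C_L$ over $\Z_2[u]$, namely $N^3$, and the Lee weights are centered at $N^3$; the Gray image has length $2N^3$, but the weights are not centered there, so your expressions $2N^3\pm(\cdots)$ are off by $N^3$. That the final threshold $n\le m-2$ still drops out of your computation is a coincidence: verifying $\tilde w_0/\tilde w_\infty>\tfrac12$ for a pair of non-extremal (indeed, in part non-occurring) weights $\tilde w_0>w_0$ and $\tilde w_\infty<w_\infty$ does not establish the hypothesis of Lemma \ref{lem 4.1}, so the argument as written does not prove minimality. The fix is simply to redo the ``finite but tedious case inspection'' with the inequality $N\ge A$ used in the correct direction (and to note that the frequencies $f_3=2^{m-n}-1$ and $f_{25}=(2^{m-n}-1)^2$ are positive, so the extremal weights are actually attained).
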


\begin{proof}
	Suppose $|D|=|E|=|F|=n(<m)$. Then $|L|=(2^m-2^n)^3$ and $(2^m-2^n)\ge2^n,\forall m,n$. Now we divide this proof in two cases.\\
	\textit{Case I}: Let $n<m-1$. Then we observe that $(2^m-2^n)>2^n$. Now with these assumptions and using Table \ref{Tab 3.1}, we have $w_0=|L|-[(2^m-2^n)(2^n)^2]=(2^m-2^n)(2^m)(2^m-2^{n+1})$ and $w_\infty=|L|+[(2^m-2^n)^2(2^n)]=(2^m-2^n)^2(2^m)$. So, $\frac{w_0}{w_\infty}=\frac{2^m-2^{n+1}}{2^m-2^n}$ (as $m\ne0$ and $n\ne m$). Now to satisfy Lemma \ref{lem 4.1}, it must be $2^{m+1}-2^{n+2}>2^m-2^n$ i.e. $2^m>2^{n+2}-2^n$ and this will be true if and only if $n+2\le m$, which is also our assumption in this case.\\
	\textit{Case II}: Let $n=m-1$. Then we observe that $|L|=(2^n)^3$ and $(2^m-2^n)=2^n$. Now with these assumptions and using Table \ref{Tab 3.2}, we have $w_0=|L|-\frac{1}{2}(2^n)^3=\frac{1}{2}|L|$ and $w_\infty=|L|+(2^n)^3=2|L|$. So, $\frac{w_0}{w_\infty}=\frac{1}{4}$ (as $n\ne0$). Since Lemma \ref{lem 4.1} is only a sufficient condition, we cannot conclude anything in this case. Hence the theorem.
\end{proof}

\begin{example}
	Let $m=3$ in Theorem \ref{thm 4.2}. Then 
	\begin{itemize}
		\item [(1)] If $n=2$, we can not conclude anything about the minimality of the binary linear self-orthogonal four-weight code $\Phi(C_L)$ having parameters $[128,8,32]$.
		\item [(2)] If $n=1$, then $\Phi(C_L)$ is a minimal binary linear self-orthogonal nine-weight code with parameters $[432,9,192]$.
		\item [(3)] If $n=0$, then $\Phi(C_L)$ is a minimal binary linear seven-weight code with parameters $[686,9,336]$.
	\end{itemize}	
\end{example}

\begin{example}
Let $m=4$ in Theorem \ref{thm 4.2}. Then 
\begin{itemize}
	\item [(1)] If $n=3$, we can not conclude anything about the minimality of the binary linear self-orthogonal four-weight code $\Phi(C_L)$ having parameters $[1024,11,256]$.
	\item [(2)] If $n=2$, then $\Phi(C_L)$ is a minimal binary linear self-orthogonal nine-weight code with parameters $[3456,12,1536]$.
	\item [(3)] If $n=1$, then $\Phi(C_L)$ is a minimal binary linear self-orthogonal nine-weight code with parameters $[5488,12,2688]$.
	\item [(4)] If $n=0$, then $\Phi(C_L)$ is a minimal binary linear seven-weight code with parameters $[6750,12,3360]$.
\end{itemize}	
\end{example}

%

\section{Conclusion}\label{section5}
In \cite{wu2021mixedtrace}, the authors constructed few-Lee weight additive codes over the mixed alphabet $\Z_p\Z_p[u],u^2=0$ with optimal Gray images by using a suitable defining set. In \cite{wang2021mixeddown}, the authors used the mixed alphabet ring  $\Z_p\Z_p[u]$ with $u^2=u$ and $p>2$ to study some few-Lee weight additive codes by employing down-sets. In this paper, we use the mixed alphabet ring $\Z_2\Z_2[u]$ to construct a class of few-Lee weight linear codes using simplicial complexes generated by a single element and also have an infinite family of self-orthogonal binary minimal codes over as Gray images. This shows that like in the case of single alphabet rings, we may employ all three popular means of constructing few-Lee weight codes including simplicial complexes when mixed alphabet rings are involved and hence fully answer the question raised in \cite{wang2021mixeddown} about the use of simplicial complexes or down-sets in the case of mixed alphabet rings. 


\end{document}